\let\csname equation*\endcsname\relax
\let\csname endequation*\endcsname\relax
\newtheorem{thm}{Theorem}[section]
\newtheorem{cor}{Corollary}
\newtheorem{lem}[thm]{Lemma}
\newtheorem{pro}{Proposition}
\newtheorem{defn}[thm]{Definition}
\newtheorem{rem}{Remark}
\newtheorem{ex}{Example}
\newcommand{\s}{\mathcal{S}}
\newcommand{\Li}{\mathcal{L}}
\newcommand{\Liu}{\mathcal{L}_{\Upsilon}}
\newcommand{\si}{\mathbb{S}^1}
\begin{document}

\title[Simple representation of global second-order normal forms]
      {A simple global representation for second-order normal forms of Hamiltonian systems relative to periodic flows}

\author{M Avenda\~no-Camacho,
J A Vallejo\footnote{Permanent address:
Facultad de Ciencias, Universidad Aut\'onoma de San Luis Potos\'i,
Lat. Av. S. Nava s/n Col. Lomas, CP 78290 San Luis Potos\'i (SLP) M\'exico.}
and Yu Vorobjev}

\address{Departamento de Matem\'aticas, Universidad de Sonora,
Blvd L. Encinas y Rosales s/n Col. Centro,
Ed. 3K-1 CP 83000 Hermosillo (Son) M\'exico.}

\ead{misaelave@mat.uson.mx,jvallejo@fc.uaslp.mx,yurimv@guaymas.uson.mx}

\begin{abstract}
We study the determination of the second-order normal form for perturbed
Hamiltonians $H_{\epsilon}=H_0 +\epsilon H_1 +\frac{\epsilon^2}{2} H_2$,
relative to the periodic flow of the unperturbed Hamiltonian $H_0$.
The formalism presented here is global, and can be easily implemented in any CAS.
We illustrate it by means of two examples: the H\'enon-Heiles and the elastic pendulum Hamiltonians.
\end{abstract}
\pacs{02.40.Yy,45.10.Hj,45.10.Na}

\section{Introduction}
In this paper we discuss some computational aspects of the normal form theory for
Hamiltonian systems on general phase spaces, that is, Poisson manifolds.
According to Deprit
\cite{Dep82}, a perturbed vector field
$$A=A_0+\epsilon A_1+\frac{\epsilon^2}{2}A_2+\cdots +
\frac{\epsilon^k}{k!}A_k+O(\epsilon^{k+1})$$
on a manifold $M$, is said to be in normal form of order $k$ relative to $A_0$ if
$[A_0,A_i]=0$ for $i\in\{1,\ldots ,k\}$. In the context of perturbation theory, the
normalization problem is formulated as follows: to find a (formal or smooth) transformation
which brings a perturbed dynamical system to a normal form up to a given order.
The construction of a normalization transformation, in the framework of the Lie transform method
\cite{Dep69,Hor66,Kam70,Mey1}, is related to the solvability of a set of linear non homogeneous
equations, called the homological equations. If the homological equations admit global solutions, defined
on the whole $M$, we speak of a global normalization, which essentially depends on the properties of
the unperturbed dynamics.

Here we are interested in the global normalization of a perturbed Hamiltonian dynamics relative to
periodic Hamiltonian flows. In this case, a result due to Cushman \cite{Cus84}, states that if $A$
is Hamiltonian, and the flow of the unperturbed vector field $A_0$ is periodic, then the true dynamics admits
a global Deprit normalization to arbitrary order. The corresponding normal forms can be determined by a
recursive procedure (the so-called Deprit diagram) involving the resolution of the homological equations at
each step.

In this paper, we extend Cushman's result to the Poisson case and derive an alternative coordinate-free
representation for the second-order normal form, involving only three intrinsic operations: two averaging
operators associated to the $\si-$action, and the Poisson bracket. We give a direct derivation of this
representation based on a period-energy argument \cite{Gor69} for Hamiltonian systems, and some
properties of the periodic averaging on manifolds \cite{MVor-11,Cus84,Mos70}. This formalism allows us to
get an efficient symbolic implementation for some models related to polynomial perturbations of the
harmonic oscillator with $1:1$ resonance. In particular, we compute the second-order normal form
of the H\'enon-Heiles \cite{Cus84}, and the elastic pendulum \cite{BM81,BB73,Geo99} Hamiltonians, expressed in terms
of the Hopf variables.

Let us remark that the second-order normal form plays a very important r\^ole in the approximation of a perturbed
dynamics by solutions of the averaged system when a long-time scale is used \cite{AKN87,MSV07}. Our desire to study
this kind of dynamics led to the present work.

Sections \ref{sec2} and \ref{sec3} contain some basic properties of the action induced by the flow of
a periodic vector field and their associated averaging operators. In Section \ref{sec4} we particularize to
the case of Hamiltonian vector fields, using an energy-period relation, and the main result is proved in Section \ref{sec5}.
The final section is devoted to the examples.

\section{Vector fields with periodic flow}\label{sec2}

Throughout the paper, we set $\si =\mathbb{R}/2\pi\mathbb{Z}$. We collect here some results regarding the flow $\mathrm{Fl}^t_X$
of a vector field $X$, on an arbitrary manifold $M$,
in the case when $\mathrm{Fl}^t_X$ is periodic. Although these results are general, later they will be applied to the
case of a Hamiltonian vector field on a Poisson manifold $(M,P)$.

Let $X\in\mathcal{X}(M)$ be a complete vector field whose flow is periodic with period function
$T\in\mathcal{C}^{\infty}(M)$, $T>0$, that is:
for any $p\in M$,
\begin{equation}\label{eq1}
\mathrm{Fl}^{t+T}_X (p)=\mathrm{Fl}^{t}_X (p) .
\end{equation}
Then, $X$ determines an $\si -$action $\si \times M\rightarrow M$ given by $(t,p)\mapsto \mathrm{Fl}^{t/\omega (p)}_X (p)$,
where $\omega := 2\pi/T>0$ is the frequency function, and $t\in \si$. Thus, the $\si -$action is periodic, with constant period $2\pi$.

The generator $\Upsilon$ of this $\si-$action can be readily computed:
$$
\Upsilon (p)=\left. \frac{\mathrm{d}}{\mathrm{d}t}\right|_{t=0}\mathrm{Fl}^{t/\omega (p)}_X (p)
=\frac{1}{\omega (p)}\left. \frac{\mathrm{d}}{\mathrm{d}s}\right|_{s=0}\mathrm{Fl}^s_X (p)
=\frac{1}{\omega (p)}X (p),
$$
so $\Upsilon =\frac{1}{\omega}X$.
Notice, from \eqref{eq1}, that $T(p)>0$ is the period of the integral curve of $X$ passing through $p\in M$
at $t=0$, $c_p :\mathbb{R}\rightarrow M$ (which is such that $c(0)=p$ and
$\dot{c}_p(0)=X(p)$). In other words, $c_p(0)=p=c_p(T(p))$. Also, each point on the image of the
integral curve $c_p$, gives the same value for the period: $T(p)=T(c_p(t)),\mbox{ for all }t\in \mathbb{R}$.
In terms of the flow of $X$, that means
$$
((\mathrm{Fl}^t_X)^*T)(p)=T(\mathrm{Fl}^t_X(p))=T(p), \mbox{ for all }p\in M.
$$
As $T$ is constant along the orbits of $X$, its Lie derivative with respect to $X$ vanishes:
$$
\Li_X T=\left. \frac{\mathrm{d}}{\mathrm{d}t}\right|_{t=0}(\mathrm{Fl}^t_X)^*T=0.
$$
Now, from $T\omega =2\pi$, we get
$$
0=\Li_X (T\omega )=(\Li_X T)\omega +T\Li_X \omega =T\Li_X \omega.
$$
But $T>0$, so this implies that $\omega$ is a first integral (or invariant) of $X$,
\begin{equation}\label{eq2}
\Li_X \omega =0.
\end{equation}
\begin{defn}
A smooth function $f\in\mathcal{C}^{\infty}(M)$ is said to be an $\si -$invariant if it is
invariant under the flow of the generator $\Upsilon =\frac{1}{\omega}X$, that is,
$$
\Liu f=0.
$$
\end{defn}
\noindent Clearly, this is equivalent to the condition $(\mathrm{Fl}^t_{\Upsilon})^*f=f$, for all
$t\in [0,2\pi ]$.
Notice that, by \eqref{eq2}, the frequency function is also an invariant of the $\si-$action,
$\Liu \omega =\frac{1}{\omega}\Li_X \omega =0$.

\section{Averaging operators}\label{sec3}

Given a vector field $X\in\mathcal{X}(M)$ with periodic flow, the associated $\si-$action
can be used to define two averaging operators, which we will denote by $\langle \cdot \rangle$ and
$\s$. In this section, $M$ will be an arbitrary manifold.

For any tensor field $R\in \Upgamma T^s_r (M)$  ($r-$covariant, $s-$contravariant), the average of
$R$ with respect to the $\si-$action on $M$ induced by $X$, is the tensor field (of the same type as
$R$) defined by
$$
\langle R\rangle :=\frac{1}{2\pi}\int^{2\pi}_0(\mathrm{Fl}^t_{\Upsilon})^*R \,\mathrm{d}t.
$$
\noindent The properties of the flow \cite{AM88} guarantee that $\langle R\rangle$ is well-defined as a
differentiable tensor field.
Also, note that if $R\in \Upgamma T^s_r (M)$, and $X_1,\ldots,X_r\in\mathcal{X}(M)$, $\alpha_1,\ldots,\alpha_s
\in\Omega^1(M)$ are arbitrary, then, for every $p\in M$, $t\mapsto (\mathrm{Fl}^t_{\Upsilon})^*R(X_1,\ldots,X_r,\alpha_1,\ldots,\alpha_s )(p)$ is a real differentiable funcion on the compact $[0,2\pi]$, hence integrable. We will use this definition mainly applied to the case of functions $f\in \mathcal{C}^{\infty}(M)$
($(0,0)-$tensors) and vector fields $Y\in\mathcal{X}(M)$ ($(0,1)-$tensors).\\
The other averaging operator that will be important in what follows, is the $\s$ operator,
$\s :\Upgamma T^s_r (M)\rightarrow \Upgamma T^s_r (M)$. It is given by
$$
\s (R):=\frac{1}{2\pi}\int^{2\pi}_0(t-\pi)(\mathrm{Fl}^t_{\Upsilon})^*R \,\mathrm{d}t.
$$
\noindent Note that both, $\langle \cdot \rangle$ and $\s$, are $\mathbb{R}-$linear operators. Other properties are listed below.
\begin{lem}
For any complete vector field $Y\in\mathcal{X}(M)$ (whose flow is not necessarily periodic) and smooth
tensor field $R\in \Upgamma T^s_r (M)$, we have:
$$
\left. \frac{\mathrm{d}}{\mathrm{d}s}\right|_{s=0}(\mathrm{Fl}^s_Y)^*\langle R\rangle =
\frac{1}{2\pi}\left( (\mathrm{Fl}^{2\pi}_Y)^*R-R \right) ,
$$
where the averaging is taken with respect to the flow of $Y$, that is,
$\langle R\rangle$ is given by $\langle R\rangle :=\frac{1}{2\pi}\int^{2\pi}_0(\mathrm{Fl}^t_Y)^*R \,\mathrm{d}t $.
\end{lem}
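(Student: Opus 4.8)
The plan is to establish the identity by a direct computation that pushes the pullback $(\mathrm{Fl}^s_Y)^*$ inside the defining integral and then exploits the one-parameter group law of the flow of $Y$. First I would write
$$
(\mathrm{Fl}^s_Y)^*\langle R\rangle
=\frac{1}{2\pi}\int_0^{2\pi}(\mathrm{Fl}^s_Y)^*(\mathrm{Fl}^t_Y)^*R\,\mathrm{d}t,
$$
which is legitimate because $(\mathrm{Fl}^s_Y)^*$ is $\R$-linear and continuous, and the integrand $t\mapsto(\mathrm{Fl}^t_Y)^*R$ is a smooth curve of tensor fields on the compact interval $[0,2\pi]$. Since the flow is a one-parameter group, $\mathrm{Fl}^t_Y\circ\mathrm{Fl}^s_Y=\mathrm{Fl}^{t+s}_Y$, and pullback reverses composition, we have $(\mathrm{Fl}^s_Y)^*(\mathrm{Fl}^t_Y)^*=(\mathrm{Fl}^{t+s}_Y)^*$.

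With this, the integral becomes $\frac{1}{2\pi}\int_0^{2\pi}(\mathrm{Fl}^{t+s}_Y)^*R\,\mathrm{d}t$, and the substitution $u=t+s$ turns it into
$$
(\mathrm{Fl}^s_Y)^*\langle R\rangle
=\frac{1}{2\pi}\int_s^{2\pi+s}(\mathrm{Fl}^u_Y)^*R\,\mathrm{d}u.
$$
Now the integrand $u\mapsto(\mathrm{Fl}^u_Y)^*R$ is smooth, so the fundamental theorem of calculus applies to this integral with variable endpoints; differentiating in $s$ gives $\frac{1}{2\pi}\bigl((\mathrm{Fl}^{2\pi+s}_Y)^*R-(\mathrm{Fl}^{s}_Y)^*R\bigr)$, and evaluating at $s=0$ yields exactly the claimed expression. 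Note that the hypothesis that the flow of $Y$ need \emph{not} be periodic is essential to the shape of the answer: the fixed averaging window $[0,2\pi]$ is unrelated to any period of $Y$, so the telescoping leaves the boundary term $(\mathrm{Fl}^{2\pi}_Y)^*R-R$ rather than vanishing.

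The only point requiring genuine care is the interchange of the pullback, and later of the $s$-derivative, with the integral sign. This is the same regularity issue already invoked to make $\langle R\rangle$ well defined as a smooth tensor field, and I would dispatch it exactly as indicated in the previous section: evaluating on any fixed collection of vector fields and one-forms at a fixed point $p\in M$ reduces each display to an ordinary parameter-dependent Riemann integral of a smooth real-valued function on $[0,2\pi]$, for which differentiation under the integral sign and the fundamental theorem of calculus hold pointwise, and the resulting tensorial identity then follows since $p$ and the test arguments were arbitrary. I expect no obstacle beyond this bookkeeping; the substance of the result is simply the telescoping forced by the group law.
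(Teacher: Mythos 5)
Your proof is correct and is essentially the paper's own computation: both rest on the group law of the flow plus the fundamental theorem of calculus, yours packaged as differentiating the shifted integral $\frac{1}{2\pi}\int_s^{2\pi+s}(\mathrm{Fl}^u_Y)^*R\,\mathrm{d}u$ at its variable endpoints, the paper's as integrating the equivalent identity $\frac{\mathrm{d}}{\mathrm{d}t}(\mathrm{Fl}^t_Y)^*R=\left.\frac{\mathrm{d}}{\mathrm{d}s}\right|_{s=0}(\mathrm{Fl}^s_Y)^*(\mathrm{Fl}^t_Y)^*R$ over $[0,2\pi]$. The interchange-of-limits bookkeeping you flag is dispatched the same way in both, so there is nothing to add.
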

\begin{proof}
Start from the identities (which follow directly from the definitions of flow and Lie derivative):
$$
(\mathrm{Fl}^t_Y)^*(\Li_Y R)= \frac{\mathrm{d}}{\mathrm{d}t}(\mathrm{Fl}^t_Y)^* R
=\left. \frac{\mathrm{d}}{\mathrm{d}s}\right|_{s=0}(\mathrm{Fl}^{s+t}_Y)^* R
=\left. \frac{\mathrm{d}}{\mathrm{d}s}\right|_{s=0}(\mathrm{Fl}^{s}_Y)^*(\mathrm{Fl}^{t}_Y)^*R .
$$
Taking the integral with respect to $t$ between $0$ and $2\pi$ on both sides, we get, on the one hand:
$$
\frac{1}{2\pi}\int^{2\pi}_0 (\mathrm{Fl}^t_Y)^*(\Li_Y R)\, \mathrm{d}t
= \left. \frac{\mathrm{d}}{\mathrm{d}s}\right|_{s=0}(\mathrm{Fl}^{s}_Y)^*\left(\frac{1}{2\pi}
\int^{2\pi}_0  (\mathrm{Fl}^t_Y)^* R\, \mathrm{d}t  \right)
=\left. \frac{\mathrm{d}}{\mathrm{d}s}\right|_{s=0}(\mathrm{Fl}^s_Y)^*\langle R\rangle,
$$
and, on the other:
$$
\frac{1}{2\pi}\int^{2\pi}_0 (\mathrm{Fl}^t_Y)^*(\Li_Y R) \, \mathrm{d}t
=\frac{1}{2\pi}\int^{2\pi}_0  \frac{\mathrm{d}}{\mathrm{d}t}(\mathrm{Fl}^t_Y)^*R \, \mathrm{d}t
=\frac{1}{2\pi}\left( (\mathrm{Fl}^{2\pi}_Y)^*R-R \right).
$$
\end{proof}
\begin{pro}\label{pro1}
For every $R\in \Upgamma T^s_r (M)$, the following properties hold:
\begin{enumerate}[(a)]
\item\label{ita} $R$ is invariant under the flow of $\Upsilon$ (that is, $\si-$invariant) if and only if
$\langle R\rangle =R$.
\item\label{itb} $\Liu \langle R\rangle =0$.
\item\label{itc} If $g\in\mathcal{C}^{\infty}(M)$ is $\si-$invariant, then $\langle gR\rangle =g\langle R\rangle $.
\item\label{itd} The averaging operator commutes with tensor contractions whenever one of the tensors is
$\si-$invariant, that is, if $S\in \Upgamma T^b_a (M)$ is $\si-$invariant and
$C^l_k$ is any contraction, then $\langle C^l_k(R\otimes S)\rangle =C^l_k(\langle R\rangle \otimes S)$.
\end{enumerate}
\end{pro}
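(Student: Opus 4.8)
The plan is to prove the four items essentially in order, since the later ones build on the earlier. For part (\ref{ita}), the "only if" direction is immediate: if $R$ is $\si$-invariant then $(\mathrm{Fl}^t_{\Upsilon})^*R = R$ for all $t$, so the integrand in $\langle R\rangle$ is constantly $R$ and the average returns $R$. For the "if" direction I would invoke the Lemma just proved, but applied with $Y=\Upsilon$ (whose flow *is* periodic with period $2\pi$, so $(\mathrm{Fl}^{2\pi}_{\Upsilon})^* = \mathrm{id}$). That Lemma gives $\Liu\langle R\rangle = \frac{1}{2\pi}((\mathrm{Fl}^{2\pi}_{\Upsilon})^*R - R) = 0$; then if additionally $\langle R\rangle = R$ we conclude $\Liu R = 0$, which is exactly $\si$-invariance. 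I would note that this same computation proves part (\ref{itb}) outright, with no extra hypothesis on $R$: the Lemma with $Y = \Upsilon$ and $(\mathrm{Fl}^{2\pi}_{\Upsilon})^* = \mathrm{id}$ forces $\Liu\langle R\rangle = 0$ for every tensor field $R$.

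For part (\ref{itc}), I would use $\si$-invariance of $g$, i.e. $(\mathrm{Fl}^t_{\Upsilon})^*g = g$, together with the fact that pullback is an algebra homomorphism on the tensor product, so $(\mathrm{Fl}^t_{\Upsilon})^*(gR) = ((\mathrm{Fl}^t_{\Upsilon})^*g)\,((\mathrm{Fl}^t_{\Upsilon})^*R) = g\,(\mathrm{Fl}^t_{\Upsilon})^*R$. Pulling the $t$-independent factor $g$ outside the integral defining the average then yields $\langle gR\rangle = g\langle R\rangle$ directly. Part (\ref{itd}) runs along the same lines but with a contraction in place of scalar multiplication: the key structural fact is that pullback by a diffeomorphism commutes with any contraction $C^l_k$, so $(\mathrm{Fl}^t_{\Upsilon})^*\,C^l_k(R\otimes S) = C^l_k\big((\mathrm{Fl}^t_{\Upsilon})^*R \otimes (\mathrm{Fl}^t_{\Upsilon})^*S\big)$. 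Using $\si$-invariance of $S$ to replace $(\mathrm{Fl}^t_{\Upsilon})^*S$ by $S$, and then using that $C^l_k(\cdot\otimes S)$ is $\mathbb{R}$-linear (hence commutes with the integral over $t$), I can move the averaging inside to land on the $R$-slot alone, giving $\langle C^l_k(R\otimes S)\rangle = C^l_k(\langle R\rangle \otimes S)$.

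I expect the only genuinely delicate point to be the interchange of integration with the linear operations (contraction in (\ref{itd}), and differentiation implicitly through the Lemma), but this is legitimate because, as already observed in the text, the integrands are continuous in $t$ on the compact interval $[0,2\pi]$ and the operations involved are $\mathbb{R}$-linear and $t$-independent. The two real ideas I am leaning on are (i) the naturality of pullback with respect to the tensor algebra and contractions, and (ii) the periodicity identity $(\mathrm{Fl}^{2\pi}_{\Upsilon})^* = \mathrm{id}$, which is what makes the Lemma collapse to the clean statement $\Liu\langle R\rangle = 0$. Everything else is routine manipulation of the defining integral.
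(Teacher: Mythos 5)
Your proposal is correct and follows essentially the same route as the paper: the same use of the Lemma (with $Y=\Upsilon$ and $2\pi$-periodicity) for the nontrivial direction of (\ref{ita}), and the same naturality/functoriality facts about pullback for (\ref{itc}) and (\ref{itd}). The only cosmetic difference is that you obtain (\ref{itb}) directly from the Lemma, whereas the paper redoes the change-of-variables computation there; your version is slightly more economical but rests on the identical periodicity argument.
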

\begin{proof}
\mbox{}
\begin{enumerate}[(a)]
\item If $R$ is invariant under the flow of $\Upsilon$, then $(\mathrm{Fl}^{t}_\Upsilon)^*R=R$, for all
$t\in[0,2\pi]$, and from this it is immediate that $\langle R\rangle =R$.
Reciprocally, if $\langle R\rangle =R$ we may apply the preceding lemma to obtain:
$$
\left. \frac{\mathrm{d}}{\mathrm{d}s}\right|_{s=0}(\mathrm{Fl}^s_\Upsilon)^*R
=\frac{1}{2\pi}\left( (\mathrm{Fl}^{2\pi}_\Upsilon)^*R-R \right),
$$
and from the fact that the flow of $\Upsilon$ is $2\pi-$periodic,
$$
\Liu R=\left. \frac{\mathrm{d}}{\mathrm{d}t}\right|_{t=0}(\mathrm{Fl}^t_\Upsilon)^*R=0.
$$
\item From the properties of the Lie derivative and the definition of $\langle R\rangle$:
$$
(\mathrm{Fl}^t_Y)^*(\Li_Y \langle R\rangle)= \frac{\mathrm{d}}{\mathrm{d}t}(\mathrm{Fl}^t_\Upsilon )^*\langle R\rangle
=\frac{\mathrm{d}}{\mathrm{d}t}\frac{1}{2\pi}\int^{2\pi}_0 (\mathrm{Fl}^{s+t}_\Upsilon)^* R\, \mathrm{d}s
=\frac{\mathrm{d}}{\mathrm{d}t}\frac{1}{2\pi}\int^{t+2\pi}_t (\mathrm{Fl}^{u}_\Upsilon)^* R\, \mathrm{d}u.
$$
Now, because $\mathrm{Fl}^{u}_\Upsilon$ is $2\pi-$periodic:
$$
(\mathrm{Fl}^t_Y)^*(\Li_Y \langle R\rangle)
=\frac{\mathrm{d}}{\mathrm{d}t}\frac{1}{2\pi}\int^{2\pi}_0 (\mathrm{Fl}^{u}_\Upsilon)^* R\, \mathrm{d}u=0,
$$
so, as $\mathrm{Fl}^{t}_\Upsilon$ is a diffeomorphism, $\Liu \langle R\rangle =0$.
\item It is a straightforward computation.
\item It is just a consequence of the commutativity between the pull-back and the tensor contractions, and the
functorial property $(\mathrm{Fl}^{t}_\Upsilon)^*(R\otimes S)=(\mathrm{Fl}^{t}_\Upsilon)^*R\otimes (\mathrm{Fl}^{t}_\Upsilon)^*S$.
\end{enumerate}
\end{proof}
\begin{rem}\label{nota}
In particular, from \eqref{itd} we get that if $Y\in\mathcal{X}(M)$ and $\alpha \in\Omega (M)$ is $\si-$invariant, then
$\langle i_Y \alpha \rangle =i_{\langle Y \rangle}\alpha$.
\end{rem}
\begin{pro}\label{pro2}
For any $R\in \Upgamma T^s_r (M)$ and $g\in\mathcal{C}^{\infty}(M)$ $\si-$invariant, the following hold:
\begin{enumerate}[(a)]
\item $\s (gR)=g\s (R)$.
\item $(\Liu \circ \s )(R)=R-\langle R\rangle $.
\end{enumerate}
\end{pro}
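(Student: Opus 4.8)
For part (a), the plan is direct and parallels Proposition~\ref{pro1}(c). Since $g$ is $\si$-invariant we have $(\mathrm{Fl}^t_\Upsilon)^*g=g$ for every $t$, so by the functorial property of pullbacks $(\mathrm{Fl}^t_\Upsilon)^*(gR)=g\,(\mathrm{Fl}^t_\Upsilon)^*R$. Substituting this into the defining integral for $\s(gR)$ and pulling the $t$-independent factor $g$ outside immediately yields $g\,\s(R)$.

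For part (b), the plan is to compute $\Liu\s(R)=\left.\frac{\mathrm{d}}{\mathrm{d}t}\right|_{t=0}(\mathrm{Fl}^t_\Upsilon)^*\s(R)$ and to exploit the specific weight $(t-\pi)$ in the definition of $\s$. First I would push the pullback inside the integral, using the composition law $(\mathrm{Fl}^t_\Upsilon)^*(\mathrm{Fl}^s_\Upsilon)^*=(\mathrm{Fl}^{s+t}_\Upsilon)^*$, to get
$$
(\mathrm{Fl}^t_\Upsilon)^*\s(R)=\frac{1}{2\pi}\int_0^{2\pi}(s-\pi)(\mathrm{Fl}^{s+t}_\Upsilon)^*R\,\mathrm{d}s.
$$
Because the integrand is smooth on the compact interval, I may differentiate under the integral sign; and since $\frac{\mathrm{d}}{\mathrm{d}t}(\mathrm{Fl}^{s+t}_\Upsilon)^*R=\frac{\mathrm{d}}{\mathrm{d}s}(\mathrm{Fl}^{s+t}_\Upsilon)^*R$ (both equal $(\mathrm{Fl}^{s+t}_\Upsilon)^*\Liu R$), evaluating at $t=0$ gives
$$
\Liu\s(R)=\frac{1}{2\pi}\int_0^{2\pi}(s-\pi)\frac{\mathrm{d}}{\mathrm{d}s}(\mathrm{Fl}^s_\Upsilon)^*R\,\mathrm{d}s.
$$

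The decisive step is then integration by parts. Since $\frac{\mathrm{d}}{\mathrm{d}s}(s-\pi)=1$, this produces exactly the averaging integral together with a boundary term:
$$
\Liu\s(R)=\frac{1}{2\pi}\Big[(s-\pi)(\mathrm{Fl}^s_\Upsilon)^*R\Big]_0^{2\pi}-\frac{1}{2\pi}\int_0^{2\pi}(\mathrm{Fl}^s_\Upsilon)^*R\,\mathrm{d}s.
$$
The second term is precisely $-\langle R\rangle$. To evaluate the boundary term I would invoke the $2\pi$-periodicity of the flow, $(\mathrm{Fl}^{2\pi}_\Upsilon)^*R=R$, so that it equals $\pi R-(-\pi)R=2\pi R$; dividing by $2\pi$ returns $R$, and the identity $(\Liu\circ\s)(R)=R-\langle R\rangle$ follows.

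The point requiring care—rather than a genuine obstacle—is the bookkeeping in the integration by parts and seeing \emph{why} the offset $\pi$ is chosen. Its derivative being $1$ is what regenerates the averaging operator, while the symmetric value $\pi$ at the midpoint is exactly what makes the two boundary contributions combine, via periodicity, into $2\pi R$ with no leftover term. The only technical facts to justify are the interchange of $\frac{\mathrm{d}}{\mathrm{d}t}$ with the integral and the identity $\partial_t=\partial_s$ on $(\mathrm{Fl}^{s+t}_\Upsilon)^*R$, both of which follow from smoothness of the flow on the compact parameter interval.
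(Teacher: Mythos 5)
Your proof is correct and follows essentially the same route as the paper's: both differentiate the pulled-back $\s(R)$ and exploit the $2\pi$-periodicity of the flow together with the special weight $(t-\pi)$ to produce $R-\langle R\rangle$. The only cosmetic difference is that the paper shifts the integration variable and applies the Leibniz rule to the resulting moving limits at general $s$ (then strips off the pullback using that $\mathrm{Fl}^s_\Upsilon$ is a diffeomorphism), whereas you differentiate under the integral at $t=0$ and integrate by parts; your boundary terms are exactly the paper's limit contributions.
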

\begin{proof}
\mbox{}
\begin{enumerate}[(a)]
\item A straightforward computation.
\item With an obvious change of variable, we have:
$$
(\mathrm{Fl}^{s}_\Upsilon)^*\s (R)=\frac{1}{2\pi}\int^{2\pi}_0(t-\pi)(\mathrm{Fl}^{s+t}_{\Upsilon})^*R\,\mathrm{d}t
=\frac{1}{2\pi}\int^{s+2\pi}_s(u-s-\pi)(\mathrm{Fl}^{u}_{\Upsilon})^*R\,\mathrm{d}u.
$$
Differentiating both sides of this identity with respect to the parameter $s$, and taking into account
the $2\pi-$periodicity of the flow $\mathrm{Fl}^{s}_{\Upsilon}$, it results:
$$
\frac{\mathrm{d}}{\mathrm{d}s}(\mathrm{Fl}^{s}_\Upsilon)^*\s (R)=
(\mathrm{Fl}^{s}_\Upsilon)^*(R-\langle R\rangle ).
$$
The statement follows by recalling that $\mathrm{Fl}^{s}_\Upsilon$ is a diffeomorphism, and the identity (see \cite{AM88}):
$$
\frac{\mathrm{d}}{\mathrm{d}s}(\mathrm{Fl}^{s}_\Upsilon)^*\s (R)= (\mathrm{Fl}^{s}_\Upsilon)^*(\Liu \s (R)).
$$
\end{enumerate}
\end{proof}
Finally, let us give some useful properties involving the averaging operators.
\begin{pro}\label{pro3}
For all $R\in \Upgamma T^s_r (M)$, the operators $\Liu$, $\langle \cdot \rangle$, and $\s$, satisfy the relations:
\begin{enumerate}[(a)]
\item\label{itea} $\langle \Liu R \rangle =\Liu \langle R \rangle =0$.
\item $\langle \s (R) \rangle = \s (\langle R\rangle ) =0$.
\item\label{itec} $\langle \mathrm{d}\alpha \rangle =\mathrm{d}\langle \alpha \rangle $, for all $\alpha \in\Omega (M)$.
\end{enumerate}
\end{pro}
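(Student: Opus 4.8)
The three identities should all follow by combining the definitions of the two averaging operators with the $2\pi$-periodicity of the flow $\mathrm{Fl}^t_\Upsilon$ and the results already assembled. I would treat the parts in turn, and in each case reduce the claim to an elementary fact about the scalar weight $t-\pi$ together with the group law for the flow.

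For (a), the equality $\Liu\langle R\rangle=0$ is exactly Proposition~\ref{pro1}\,(b), so only $\langle\Liu R\rangle=0$ needs an argument. Here I would invoke the identity $(\mathrm{Fl}^t_\Upsilon)^*(\Liu R)=\frac{\mathrm{d}}{\mathrm{d}t}(\mathrm{Fl}^t_\Upsilon)^*R$ already used at the start of the proof of the Lemma, to rewrite
$$\langle\Liu R\rangle=\frac{1}{2\pi}\int_0^{2\pi}\frac{\mathrm{d}}{\mathrm{d}t}(\mathrm{Fl}^t_\Upsilon)^*R\,\mathrm{d}t=\frac{1}{2\pi}\left((\mathrm{Fl}^{2\pi}_\Upsilon)^*R-R\right),$$
and then to conclude using that the flow of $\Upsilon$ is $2\pi$-periodic, so $(\mathrm{Fl}^{2\pi}_\Upsilon)^*R=R$ and the bracket vanishes.

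For (b), the vanishing $\s(\langle R\rangle)=0$ is immediate: by Proposition~\ref{pro1}\,(b) we have $\Liu\langle R\rangle=0$, hence $\langle R\rangle$ is $\si$-invariant and $(\mathrm{Fl}^t_\Upsilon)^*\langle R\rangle=\langle R\rangle$ may be pulled out of the defining integral of $\s$, leaving the scalar factor $\int_0^{2\pi}(t-\pi)\,\mathrm{d}t=0$. The other half, $\langle\s(R)\rangle=0$, is the step I expect to demand the most care. Here I would write out both averages as one double integral, interchange the order of integration (legitimate since the integrand is jointly continuous on the compact square $[0,2\pi]^2$), and use the group law $(\mathrm{Fl}^s_\Upsilon)^*(\mathrm{Fl}^t_\Upsilon)^*=(\mathrm{Fl}^{t+s}_\Upsilon)^*$ to obtain
$$\langle\s(R)\rangle=\frac{1}{(2\pi)^2}\int_0^{2\pi}(t-\pi)\left(\int_0^{2\pi}(\mathrm{Fl}^{t+s}_\Upsilon)^*R\,\mathrm{d}s\right)\mathrm{d}t.$$
For fixed $t$ the substitution $v=t+s$ together with the $2\pi$-periodicity of $s\mapsto(\mathrm{Fl}^s_\Upsilon)^*R$ turns the inner integral into $2\pi\langle R\rangle$, independent of $t$; the outer integral then reduces once more to the vanishing factor $\int_0^{2\pi}(t-\pi)\,\mathrm{d}t$.

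Finally, (c) rests on the fact that pullback commutes with the exterior derivative, $(\mathrm{Fl}^t_\Upsilon)^*\mathrm{d}\alpha=\mathrm{d}\,(\mathrm{Fl}^t_\Upsilon)^*\alpha$, after which I would move $\mathrm{d}$ outside the integral sign. The only point needing attention is this interchange of $\mathrm{d}$ with $\int_0^{2\pi}(\cdot)\,\mathrm{d}t$, which is legitimate by differentiation under the integral sign, the integrand being jointly smooth on $[0,2\pi]$, a compact interval. In summary, the genuine analytic content is confined to the Fubini-and-periodicity argument in (b); the rest is bookkeeping with the flow identities and the averaging properties already established in the preceding Lemma and Propositions~\ref{pro1} and~\ref{pro2}.
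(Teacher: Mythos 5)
Your proposal is correct and follows essentially the route the paper intends: its proof of Proposition~\ref{pro3} is declared to be ``straightforward computations'' from the definitions, periodicity of the flow, and the commutation of $\mathrm{d}$ with pull-backs, and your elaboration (the flow identity for part (a), the vanishing of $\int_0^{2\pi}(t-\pi)\,\mathrm{d}t$ together with the Fubini-and-periodicity argument for part (b), and differentiation under the integral sign for part (c)) is exactly that computation carried out in detail. No gaps.
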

\begin{proof}
Straightforward computations, making use of Proposition \ref{pro2} and the fact that $\mathrm{d}$ commutes
with pull-backs.
\end{proof}

\section{The Hamiltonian case}\label{sec4}

Let $(M,P)$ be an $m-$dimensional Poisson manifold, where $P\in \Upgamma \Uplambda^2 TM$ is a Poisson
bivector determining a bracket $\{ f,g\}=P(\mathrm{d}f,\mathrm{d}g)$, for all $f,g\in\mathcal{C}^{\infty}(M)$.
For every $f$, its Hamiltonian vector field $X_f\in\mathcal{X}(M)$ is given by
$X_f (g):=\{ f,g\}$, for any $g\in\mathcal{C}^{\infty}(M)$, equivalently,
\begin{equation}\label{hamv}
X_f =i_{\mathrm{d}f}P .
\end{equation}
At any point the distribution spanned by the Hamiltonian vector fields is involutive, as a consequence
of Jacobi's identity for the Poisson bracket $\{ \cdot ,\cdot \}$. Thus, these Hamiltonian vector fields give rise
to a foliation whose leaves turn out to be symplectic manifolds (see \cite{Wei83}). On each leaf $S$, the
restriction $P|_S$ is a non-degenerate Poisson bivector field which determines a symplectic structure
$\sigma_S$ through:
$$
\sigma_S (X_f,X_g):=\{ f,g\}.
$$
Indeed, by the splitting theorem due to Weinstein (\cite{Wei83}), the local structure of $(M,P)$ can be described as follows:
for any $p\in M$ there exists a chart $(U,\phi )$ of $M$ around $p$ such that, if
$\{ q_1,...,q_k,p_1,...,p_k,y_1,...,y_l \}$ are the coordinates of $\phi:U\rightarrow \mathbb{R}^m$
($2k+l=m$), then
\begin{equation}\label{poisson}
P|_U =\sum^k_{i=1}\frac{\partial}{\partial q_i}\wedge\frac{\partial}{\partial p_i}
+\frac{1}{2}\sum^l_{i,j=1}\varphi_{ij}(y_1,...,y_l)\frac{\partial}{\partial y_i}\wedge\frac{\partial}{\partial y_j},
\end{equation}
where $\varphi :\pi_l (U)\subset \mathbb{R}^l\rightarrow \mathbb{R}$ is smooth and $\varphi_{ij}(p)=0$
($\pi_l :\mathbb{R}^m=\mathbb{R}^{2k}\times \mathbb{R}^l\rightarrow \mathbb{R}^l$ is the canonical projection).
The non-negative integer $k$ is called the rank of the Poisson structure $P$ at $p\in M$. When $k=m$, $P$ induces
a symplectic structure on $M$. Then, the symplectic leaf $S$ through $p\in M$, is given by the equations $(y_1,...,y_l)=(0,...,0)$.

When moving along the flow of a Hamiltonian vector field, which is tangent to some integral submanifold $S$,
it is clear that we stay on the same symplectic leaf $S$. Next, we study what happens on these leaves when
the Hamiltonian vector field has periodic flow.

We will need first an auxiliary result, interesting in its own, known as the period-energy relation (see
\cite{Gor69}).
\begin{pro}\label{pro4}
Let $X$ be a vector field on the symplectic manifold $(S,\sigma )$ whose flow is periodic with
period function $T\in\mathcal{C}^{\infty}(M)$, $T>0$ (and frequency $\omega=2\pi/T$). If $X$ is the Hamiltonian
vector field of a certain function $f\in\mathcal{C}^{\infty}(M)$ (that is, $i_X \sigma =-\mathrm{d}f$), then:
\begin{equation}\label{enper}
\mathrm{d}\omega \wedge \mathrm{d}f=0=\mathrm{d}T\wedge\mathrm{d}f.
\end{equation}
\end{pro}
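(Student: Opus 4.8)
The plan is to reduce everything to a single statement about the linearization of the flow over one period. First I observe that since $X=X_f$ is Hamiltonian on $(S,\sigma)$, Cartan's formula together with $\mathrm{d}\sigma=0$ and $i_X\sigma=-\mathrm{d}f$ gives $\Li_X\sigma=\mathrm{d}\,i_X\sigma+i_X\mathrm{d}\sigma=-\mathrm{d}(\mathrm{d}f)=0$, so the flow preserves the symplectic form: $(\mathrm{Fl}^t_X)^*\sigma=\sigma$ for all $t$. I also note that $\mathrm{d}\omega$ and $\mathrm{d}T$ are pointwise proportional, since $\omega=2\pi/T$ yields $\mathrm{d}\omega=-(2\pi/T^2)\,\mathrm{d}T$; hence it suffices to prove $\mathrm{d}T\wedge\mathrm{d}f=0$, and the first equality in \eqref{enper} will follow at once.

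Next I would differentiate the defining relation of the period. Define $F\colon S\to S$ by $F(p)=\mathrm{Fl}^{T(p)}_X(p)$; by \eqref{eq1} we have $F=\mathrm{id}_S$, so $\mathrm{d}F_p=\mathrm{Id}$ for every $p$. Writing $F$ as the composition $p\mapsto(T(p),p)\mapsto\mathrm{Fl}^{T(p)}_X(p)$ and differentiating, the time-derivative term produces $\left.\frac{\partial}{\partial t}\right|_{t=T(p)}\mathrm{Fl}^t_X(p)=X(\mathrm{Fl}^{T(p)}_X(p))=X(p)$ (using $\mathrm{Fl}^{T(p)}_X(p)=p$), while the base-point term produces the monodromy $A:=(\mathrm{d}\,\mathrm{Fl}^{T(p)}_X)_p$. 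Thus $\mathrm{Id}=\mathrm{d}F_p$ reads
\[
v=\mathrm{d}T_p(v)\,X(p)+A v,\qquad v\in T_pS,
\]
so $A v=v-\mathrm{d}T_p(v)\,X(p)$, i.e. $A=\mathrm{Id}-X\otimes\mathrm{d}T$ as a linear map on $T_pS$.

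The heart of the argument is then to exploit that $A$ is symplectic. Since $\mathrm{Fl}^{T(p)}_X$ fixes $p$ and preserves $\sigma$, its linearization satisfies $\sigma_p(Av,Aw)=\sigma_p(v,w)$ for all $v,w\in T_pS$. Substituting $Av=v-\mathrm{d}T(v)X$ and expanding by bilinearity, the term with $\sigma(X,X)=0$ drops, and using $\sigma(\cdot,X)=\mathrm{d}f$ and $\sigma(X,\cdot)=-\mathrm{d}f$ (both coming from the convention $i_X\sigma=-\mathrm{d}f$), I obtain
\[
\sigma_p(Av,Aw)=\sigma_p(v,w)+\mathrm{d}T(v)\,\mathrm{d}f(w)-\mathrm{d}T(w)\,\mathrm{d}f(v).
\]
Comparing with $\sigma_p(Av,Aw)=\sigma_p(v,w)$ gives $\mathrm{d}T(v)\,\mathrm{d}f(w)-\mathrm{d}T(w)\,\mathrm{d}f(v)=0$ for all $v,w$, which is precisely $(\mathrm{d}T\wedge\mathrm{d}f)(v,w)=0$; hence $\mathrm{d}T\wedge\mathrm{d}f=0$, and by the proportionality above $\mathrm{d}\omega\wedge\mathrm{d}f=0$ as well.

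I expect the main obstacle to be the careful differentiation of the implicit relation $\mathrm{Fl}^{T(p)}_X(p)=p$, in which $T$ itself varies with the base point: one must track both the dependence of the flow time and the dependence of the initial condition on $p$, and correctly identify the time contribution as $\mathrm{d}T(v)\,X(p)$ by using $\mathrm{Fl}^{T(p)}_X(p)=p$. Once the monodromy is written as $A=\mathrm{Id}-X\otimes\mathrm{d}T$, the symplecticity computation is routine; the only point to keep clean is the sign in $\sigma(\cdot,X)=\mathrm{d}f$, which must be read off consistently from $i_X\sigma=-\mathrm{d}f$.
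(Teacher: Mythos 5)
Your proof is correct, but it takes a genuinely different route from the paper's. The paper stays entirely within the averaging formalism it has just built: starting from $\Li_X\sigma=0$ it expands $\Li_{\omega\Upsilon}\sigma=\omega\Liu\sigma+\mathrm{d}\omega\wedge i_\Upsilon\sigma=\omega\Liu\sigma-\frac{1}{\omega}\mathrm{d}\omega\wedge\mathrm{d}f$, then applies the operator $\langle\cdot\rangle$ and invokes $\langle\Liu\sigma\rangle=0$ together with the $\si$-invariance of $\omega$ and $f$ to kill the first term and conclude. Your argument instead differentiates the fixed-point identity $\mathrm{Fl}^{T(p)}_X(p)=p$ to identify the monodromy as $A=\mathrm{Id}-X\otimes\mathrm{d}T$ and then exploits symplecticity of $A$; this is the classical period--energy argument in the spirit of Gordon's original paper, and your sign bookkeeping ($\sigma(v,X)=\mathrm{d}f(v)$, $\sigma(X,w)=-\mathrm{d}f(w)$, the $\sigma(X,X)$ term vanishing) checks out, as does the reduction of $\mathrm{d}\omega\wedge\mathrm{d}f=0$ to $\mathrm{d}T\wedge\mathrm{d}f=0$ via $\mathrm{d}\omega=-(2\pi/T^2)\mathrm{d}T$. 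What each approach buys: yours is more elementary and self-contained, needing only that the time-$T(p)$ flow map is a symplectomorphism fixing $p$, and it makes the geometric content (the monodromy differs from the identity by a rank-one shear along $X$) transparent; the paper's version is less pointwise but deliberately reuses Propositions \ref{pro1}--\ref{pro3}, which is the machinery the rest of the paper runs on, and as a byproduct of the same computation it immediately yields the corollary $\Liu\sigma=0$, which the paper needs afterwards and which your argument does not produce directly.
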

\begin{proof}
By hypothesis, we have,
$$
\Li_X \sigma =i_X \mathrm{d}\sigma +\mathrm{d}i_X \sigma =-\mathrm{d}^2 f=0.
$$
On the other hand, using the generator $\Upsilon =X/\omega$ of the $\si-$action induced by $X$:
$$
\Li_X \sigma =\omega \Liu \sigma -\frac{1}{\omega}\mathrm{d}\omega\wedge \mathrm{d}f.
$$
Recalling that $\omega ,f$ are first integrals of $X$, and hence $\si-$invariants, applying the
averaging operator $\langle \cdot \rangle$ to the last identity, taking into account that
$\langle \Liu \sigma\rangle =0$ (Proposition \ref{pro3} \eqref{itea}), and the commutativity
between $\mathrm{d}$ and $\langle \cdot \rangle$  (Proposition \ref{pro3} \eqref{itec}), we get:
$$
0=\langle \omega\Liu \sigma \rangle - \langle \frac{1}{\omega}\mathrm{d}\omega\wedge \mathrm{d}f \rangle
=\omega\langle \Liu \sigma \rangle -\frac{1}{\omega}\mathrm{d}\omega\wedge \mathrm{d}f
=-\frac{1}{\omega}\mathrm{d}\omega\wedge \mathrm{d}f.
$$
\end{proof}
\begin{rem}
Notice that, in terms of Hamiltonian vector fields, we can
write the energy-period relation \eqref{enper} as follows,
\begin{equation}\label{enper2}
X_\omega \wedge X_f =0.
\end{equation}
\end{rem}
Also, in the course of the proof we have seen that, if $\Upsilon =\frac{1}{\omega}X$ is the generator
of the $\si-$action induced by $X$:
$$
0=\Li_X \sigma =\omega\Liu \sigma -\frac{1}{\omega}\mathrm{d}\omega\wedge \mathrm{d}f,
$$
so from \eqref{enper} we get the following consequence.
\begin{cor}
The symplectic form $\sigma$ is $\si-$invariant, $\Liu \sigma =0$. In particular, $\langle \sigma \rangle =\sigma$.
\end{cor}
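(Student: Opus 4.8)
The plan is to assemble two facts that have already been established while proving Proposition \ref{pro4}, since the statement is essentially a bookkeeping consequence of them. First I would recall the decomposition of the Lie derivative along $X=\omega\Upsilon$. Using the identity $\Li_{gY}\alpha = g\Li_Y\alpha + \mathrm{d}g\wedge i_Y\alpha$ together with $i_\Upsilon\sigma = \frac{1}{\omega}i_X\sigma = -\frac{1}{\omega}\mathrm{d}f$, one recovers exactly
$$
\Li_X\sigma = \omega\Liu\sigma - \frac{1}{\omega}\mathrm{d}\omega\wedge\mathrm{d}f,
$$
which is the relation displayed just before the statement. On the other hand, since $i_X\sigma = -\mathrm{d}f$ and $\sigma$ is closed, Cartan's formula gives $\Li_X\sigma = i_X\mathrm{d}\sigma + \mathrm{d}i_X\sigma = -\mathrm{d}^2 f = 0$.

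The second ingredient is the period-energy relation \eqref{enper}, namely $\mathrm{d}\omega\wedge\mathrm{d}f = 0$, already proven in Proposition \ref{pro4}. Substituting both facts into the decomposition above collapses it to $\omega\Liu\sigma = 0$. Because the frequency function satisfies $\omega = 2\pi/T > 0$ everywhere, I can divide by $\omega$ pointwise and conclude $\Liu\sigma = 0$, i.e.\ $\sigma$ is $\si$-invariant. For the final assertion I would simply invoke Proposition \ref{pro1}\eqref{ita}: a tensor field equals its own $\si$-average precisely when it is $\si$-invariant, so $\Liu\sigma = 0$ immediately yields $\langle\sigma\rangle = \sigma$.

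I do not expect any genuine obstacle here. All the analytic content — the period-energy identity and the behaviour of the averaging operator — has been handled in the preceding results, and what remains is the purely algebraic combination of $\Li_X\sigma = 0$ with $\mathrm{d}\omega\wedge\mathrm{d}f = 0$. The only point requiring a little care is the strict positivity of $\omega$, which is exactly what licenses the passage from $\omega\Liu\sigma = 0$ to $\Liu\sigma = 0$, rather than leaving the conclusion valid only where $\omega$ happens to be nonzero.
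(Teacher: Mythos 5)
Your proof is correct and follows the same route as the paper: combine the identity $0=\Li_X\sigma=\omega\Liu\sigma-\frac{1}{\omega}\mathrm{d}\omega\wedge\mathrm{d}f$ from the proof of Proposition \ref{pro4} with the period-energy relation \eqref{enper}, divide by $\omega>0$, and conclude $\langle\sigma\rangle=\sigma$ via Proposition \ref{pro1}\eqref{ita}. No issues.
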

Notice that, under the hypothesis of Proposition \ref{pro4},
if $g\in\mathcal{C}^{\infty}(S)$ is $\si-$invariant, then its Hamiltonian vector
field $X_g\in\mathcal{X}(S)$ is also $\si-$invariant.
Indeed, recalling that $\mathrm{d}$ commutes with the averaging (Proposition \ref{pro3} \eqref{itec}),
Remark \ref{nota}, and the preceding Corollary, we get:
$$
i_{X_{\langle g \rangle}}\sigma =-\mathrm{d}\langle g \rangle =-\langle \mathrm{d}g \rangle
=\langle i_{X_{g}}\sigma \rangle =i_{\langle X_g \rangle }\sigma .
$$
Hence, by the non-degeneracy of $\sigma$, $\langle X_g \rangle = X_{\langle g \rangle}$. Now, if
$g$ is $\si-$invariant, $\langle g \rangle =g$, and so $\langle X_g \rangle =X_g$.\\
As a consequence, for any $\si-$invariant $g\in\mathcal{C}^{\infty}(S)$, we have
$$
\Liu X_g =[X_{\Upsilon},X_g ]=0.
$$

Now, suppose that we are given a function $H\in\mathcal{C}^{\infty}(M)$ on the Poisson manifold
$(M,P)$ such that its Hamiltonian vector field $X_H \in\mathcal{X}(M)$ has periodic flow
(with frequency function $\omega\in \mathcal{C}^{\infty}(M)$, $\omega >0$). Let
$\Upsilon =\frac{1}{\omega}X_H$ be the generator of the associated $\si-$action. From the
results above we know that $M$ is foliated by symplectic leaves $S$ in such a way that
$P|_S$ is equivalent to a symplectic form $\sigma_S$ (recall \eqref{poisson}), and these
are invariant under Hamiltonian flows. Thus:
$$
0=\Li_{X_H}P=\Li_{\omega \Upsilon}P=
\omega\Liu P-\frac{\omega\Upsilon}{\omega}\wedge i_{\mathrm{d}\omega}P=
\omega\Liu P+\frac{1}{\omega} X_H \wedge X_\omega ,
$$
where we have used the formula $\Li_{fX}A=f\Li_X A-X\wedge i_{\mathrm{d}f}A$ (valid for
any function $f\in\mathcal{C}^{\infty}(M)$, vector field $X\in\mathcal{X}(M)$ and multivector
field $A\in\Upgamma (\Uplambda TM)$, see \cite{Mic08}, p. 358), as well as \eqref{hamv} and the fact that $\omega >0$.
From this identity and the energy-period
relation \eqref{enper2}, we deduce that $P$ is $\si-$invariant, $\Liu P=0$.\\
Moreover, if  $g\in\mathcal{C}^{\infty}(M)$ is $\si-$invariant, the flow of its Hamiltonian vector
field $X_g$ leaves the integral submanifolds $S$ invariant and, as we have seen, on each of them
it satisfies $\Liu X_g=0$, so this is also true on $M$. In other words, the flows of $\Upsilon$ and
$X_g$ commute on $M$. The following result exploits this fact.
\begin{pro}
Let $(M,P)$ be a Poisson manifold, and $H\in\mathcal{C}^{\infty}(M)$ such that its Hamiltonian
vector field $X_H \in\mathcal{X}(M)$ has periodic flow. If $f,g\in\mathcal{C}^{\infty}(M)$ and
$g$ is $\si-$invariant, then:
\begin{enumerate}[(a)]
\item\label{item0} If $\omega$ is the frequency function of $X_H$, then, $X_H \wedge X_\omega =0$.
\item\label{itema} $\{H,g\}=0$.
\item\label{itemb} $\langle \{f,g\}\rangle =\{\langle f \rangle ,g\}$.
\end{enumerate}
\end{pro}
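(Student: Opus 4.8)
The plan is to dispatch (a) and (b) from material already in hand, and to concentrate the work on (c), where the averaging operator must be shown to commute with the Poisson bracket. For part (a) I would argue leaf by leaf. Fix a symplectic leaf $S$ of $(M,P)$; since Hamiltonian vector fields are tangent to the leaves, $X_H|_S$ is the Hamiltonian vector field of $H|_S$ with respect to $\sigma_S$ and its flow is still periodic, so Proposition \ref{pro4}, in the form of the energy--period relation \eqref{enper2}, gives $X_H\wedge X_\omega=0$ on $S$. Because the leaves foliate $M$ and both $X_H$ and $X_\omega$ are tangent to them, the identity holds at every point of $M$; this is precisely the relation already invoked just before the statement to obtain $\Liu P=0$. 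For part (b) I would use the definition of $\si$-invariance directly: since $\Upsilon=\frac{1}{\omega}X_H$ with $\omega>0$, the hypothesis $\Liu g=0$ reads $\frac{1}{\omega}X_H(g)=0$, hence $X_H(g)=0$, and as $\{H,g\}=X_H(g)$ we conclude $\{H,g\}=0$.

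Part (c) is the crux. The strategy is to push the pull-back $(\mathrm{Fl}^t_{\Upsilon})^*$ through the bracket and then average. Two ingredients, both established before the statement, make this work: the Poisson bivector is $\si$-invariant, so $(\mathrm{Fl}^t_{\Upsilon})^*P=P$ for all $t$, i.e.\ each $\mathrm{Fl}^t_{\Upsilon}$ is a Poisson diffeomorphism; and $g$ is $\si$-invariant, so $(\mathrm{Fl}^t_{\Upsilon})^*g=g$. Combining the functoriality of the bracket under a Poisson map with these two facts gives $(\mathrm{Fl}^t_{\Upsilon})^*\{f,g\}=\{(\mathrm{Fl}^t_{\Upsilon})^*f,\,g\}$ for every $t$. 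Integrating in $t$ over $[0,2\pi]$ and dividing by $2\pi$ produces $\langle\{f,g\}\rangle$ on the left, and on the right I interchange the $\R$-linear operator $\{\cdot,g\}=P(\mathrm{d}\,\cdot\,,\mathrm{d}g)$ with the integral to obtain $\{\langle f\rangle,g\}$.

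I expect the only genuine obstacle to be the justification of that interchange, i.e.\ making rigorous that the first-order operator $\{\cdot,g\}$ commutes with $\frac{1}{2\pi}\int_0^{2\pi}(\mathrm{Fl}^t_{\Upsilon})^*(\cdot)\,\mathrm{d}t$. A cleaner route that sidesteps the differentiation-under-the-integral bookkeeping is to write $\{f,g\}=-i_{X_g}\mathrm{d}f$, recall that $X_g$ is $\si$-invariant (shown earlier from the $\si$-invariance of $g$), and apply Proposition \ref{pro1} \eqref{itd} with the $\si$-invariant tensor $X_g$ to get $\langle i_{X_g}\mathrm{d}f\rangle=i_{X_g}\langle\mathrm{d}f\rangle=i_{X_g}\mathrm{d}\langle f\rangle$, where the last equality uses Proposition \ref{pro3} \eqref{itec}. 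Since $i_{X_g}\mathrm{d}\langle f\rangle=X_g(\langle f\rangle)=-\{\langle f\rangle,g\}$, the sign in $\{f,g\}=-i_{X_g}\mathrm{d}f$ restores exactly the desired identity $\langle\{f,g\}\rangle=\{\langle f\rangle,g\}$.
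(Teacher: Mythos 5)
Your proposal is correct and follows essentially the same route as the paper, whose proof is only a sketch: part (a) from the leafwise energy--period relation \eqref{enper2} together with the computation of $\Li_{X_H}P$ given just before the statement, part (b) from $\{H,g\}=X_H(g)=\omega\,\Liu g=0$, and part (c) from the $\si$-invariance of $g$ (equivalently of $X_g$, so that the flows of $\Upsilon$ and $X_g$ commute). Your second argument for (c), via Proposition \ref{pro1}\eqref{itd} applied to the contraction $i_{X_g}\mathrm{d}f$ and Proposition \ref{pro3}\eqref{itec}, is a clean and sign-correct way of making the paper's one-line justification rigorous.
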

\begin{proof}
Item \ref{item0} follows from the above considerations, while \eqref{itema} is proved by a straightforward computation. Item \eqref{itemb} is a direct
consequence of the $\si-$invariance of $g$ and the fact that the flows of $\Upsilon$ and
$X_g$ commute.
\end{proof}

\section{The main result}\label{sec5}
Let $H_\varepsilon =H_0 +\varepsilon H_1+\frac{1}{2}\varepsilon^2H_2 +O(\varepsilon^3)$ an
$\varepsilon -$dependent Hamiltonian function which describes a perturbed Hamiltonian system on a
Poisson manifold $(M,P)$, with associated bracket $\{ \cdot ,\cdot \}$. We will denote by
$X_{H_\varepsilon}=X_{H_0}+\varepsilon X_{H_1}+\frac{1}{2}\varepsilon^2 X_{H_2}+O(\varepsilon^3)$
the corresponding Hamiltonian vector field.
Recall that the perturbed Hamiltonian vector field $X_{H_\varepsilon}$ is in (Deprit) normal form
relative to $X_{H_0}$ of order $k$ in $\varepsilon$ if
\begin{equation}\label{eq3}
[X_{H_0},X_{H_i}]=0,\mbox{ for all }i\in\{ 1,2,...,k \}.
\end{equation}
\noindent In terms of Hamiltonian functions, \eqref{eq3} is satisfied whenever
$$
\{ H_0,H_i \}=0,\mbox{ for all }i\in\{ 1,2,...,k \}.
$$

Usually, one can bring the Hamiltonian to a normal form by means of near-to-identity transformations. Let us
recall some definitions and basic properties.

Let $M$ be a manifold, $N\subset M$ be a non-empty open domain, and $\delta >0$. A smooth mapping
$\Phi :(-\delta ,\delta )\times N\to M$ is said to be a near-to-identity transformation if, for
each $\varepsilon\in (-\delta,\delta )$, the map $\Phi_\varepsilon :N\to M$ given by
$$
\Phi_\varepsilon (x)=\Phi (\varepsilon,x)
$$
is such that it is a diffeomorphism onto its image and, moreover,
$
\Phi_0 =\mathrm{id}_M$.

These transformations have the following important property: whenever we have a time-dependent vector field
$A_\varepsilon$ on $M$, and a near-to-identity transformation $\Phi_\varepsilon$, the pull-back
$\Phi^*_\varepsilon A_\varepsilon$ is again an $\varepsilon-$dependent vector field on $N$, and it is
such that,
$$
\left. \Phi^*_\varepsilon A_\varepsilon \right|_{\varepsilon =0}=A_0.
$$
In other words, thinking of $A_\varepsilon$ as a perturbed vector field, near-to-identity transformations
preserve the unperturbed part.

Actually, we will construct the required transformations out from the flow of a perturbed vector field.
The following properties say that we can do that on each open domain with compact closure.
\begin{pro}
Let $F:\mathbb{R}\times M\to M$ a smooth mapping, sending $(\varepsilon,x)$ to $F_\varepsilon (x)=F(\varepsilon,x)$,
such that $F_0=\mathrm{id}_M$. Then, for any open domain with compact closure $N\subset M$, there exists
a $\delta >0$ such that, for each $\varepsilon\in (-\delta,\delta)$, the restriction
$F_\varepsilon |_N$ is a diffeomorphism onto its image.
\end{pro}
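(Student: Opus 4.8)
The plan is to show that for a smooth map $F:\mathbb{R}\times M\to M$ with $F_0=\mathrm{id}_M$, each restriction $F_\varepsilon|_N$ is a diffeomorphism onto its image for $\varepsilon$ small enough, uniformly over an open domain $N$ with compact closure. The two things to establish are \emph{local} invertibility (each $F_\varepsilon$ is a local diffeomorphism near every point of $N$) and \emph{injectivity} (distinct points of $N$ are not identified). Both are perturbative consequences of the fact that $F_0$ is the identity, and the compactness of $\overline{N}$ is what upgrades pointwise smallness of $\varepsilon$ to a single uniform $\delta$.

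First I would treat local invertibility via the inverse function theorem. At $\varepsilon=0$ the tangent map $T_xF_0=\mathrm{id}_{T_xM}$ is an isomorphism for every $x$. By smoothness of $F$ in both arguments, the determinant of $T_xF_\varepsilon$ (computed in any local chart) depends continuously on $(\varepsilon,x)$ and equals $1$ at $\varepsilon=0$; hence it is nonzero on a neighbourhood. Covering $\overline{N}$ by finitely many charts and using compactness, one extracts a uniform $\delta_1>0$ such that $T_xF_\varepsilon$ is invertible for all $x\in\overline{N}$ and all $|\varepsilon|<\delta_1$. By the inverse function theorem, $F_\varepsilon$ is then a local diffeomorphism at each point of $N$ in this range.

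Next I would handle global injectivity on $N$, which is the more delicate point and the one I expect to be the main obstacle, since local invertibility alone never guarantees it. The standard device is to argue by contradiction using compactness: suppose no uniform $\delta$ works, so there are sequences $\varepsilon_n\to 0$ and pairs $x_n\neq x_n'$ in $N$ with $F_{\varepsilon_n}(x_n)=F_{\varepsilon_n}(x_n')$. By compactness of $\overline{N}$ we may pass to convergent subsequences $x_n\to x_*$ and $x_n'\to x_*'$ in $\overline{N}$. Continuity of $F$ gives $F_0(x_*)=F_0(x_*')$, i.e.\ $x_*=x_*'=:x$ since $F_0=\mathrm{id}_M$. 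Thus both sequences accumulate at the same point, and for large $n$ the points $x_n,x_n'$ lie in a single chart around $x$ on which $F_{\varepsilon_n}$ is a diffeomorphism by the previous step; this forces $x_n=x_n'$, the desired contradiction.

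Combining the two, one sets $\delta\le\delta_1$ small enough that injectivity also holds, so that for $|\varepsilon|<\delta$ the map $F_\varepsilon|_N$ is an injective local diffeomorphism, hence a diffeomorphism onto its image. The only subtlety worth flagging is that the "collapse to a single point" argument genuinely requires $F_0=\mathrm{id}_M$ (or at least $F_0$ injective on $\overline{N}$); without it the limit pair $(x_*,x_*')$ could be distinct and the contradiction would fail. I would keep the chart bookkeeping to a minimum, invoking the standard fact that a smooth injective immersion from an open set, whose inverse is continuous on the image, is a diffeomorphism onto its image.
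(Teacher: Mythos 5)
Your proof is correct and follows the same basic route as the paper, which disposes of the statement in one line by invoking compactness of $\overline{N}$ together with the implicit function theorem on a finite cover. In fact your write-up supplies the one ingredient the paper's proof leaves entirely implicit, namely the passage from local invertibility to global injectivity on $N$ (via the sequential compactness contradiction); just note that the final step of that argument needs a uniform injectivity radius near the limit point $x_*$ --- e.g.\ writing $F_\varepsilon=\mathrm{id}+G_\varepsilon$ in a chart with $\|DG_\varepsilon\|<\tfrac12$ on a convex neighbourhood --- rather than the bare pointwise inverse function theorem.
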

\begin{proof}
It is an immediate consequence of the fact that the closure $\overline{N}$
can be covered by a finite number of open neighborhoods, such that the Implicit Function Theorem applies on them.
\end{proof}
\begin{pro}
Let $A_\varepsilon =A_0 +\varepsilon R_\varepsilon$ be a smooth vector field on a manifold $M$. Assume that the unperturbed vector field
$A_0$ is complete on $M$. Then, for any open domain $N\subset M$, with compact closure, and any constant $\delta >0$, there
exists another constant $L>0$ such that the flow $\mathrm{Fl}^t_{A_\varepsilon}$ of $A_\varepsilon$, is well-defined on $N$ for any
$t\in [0,L/\varepsilon ]$ and each $\varepsilon \in (0,\delta ]$.
\end{pro}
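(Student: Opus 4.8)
The plan is to pass to the slow time scale $\tau=\varepsilon t$ and to the interaction picture associated with the complete flow of $A_0$, reducing the claim to a non-escape statement for a slowly varying curve. Given $x\in\overline N$ and $\varepsilon\in(0,\delta]$, let $x(t)=\mathrm{Fl}^t_{A_\varepsilon}(x)$ be defined on its maximal interval $[0,\beta)$, and set $Y(t):=\mathrm{Fl}^{-t}_{A_0}(x(t))$. Since $A_0$ is complete, $\mathrm{Fl}^{-t}_{A_0}$ is a globally defined diffeomorphism, so $Y$ is defined on $[0,\beta)$ and, conversely, $x(t)=\mathrm{Fl}^t_{A_0}(Y(t))$ is defined for every $t$ at which $Y$ exists. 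A direct differentiation using $A_\varepsilon=A_0+\varepsilon R_\varepsilon$ gives
\[
\dot Y(t)=\varepsilon\,\bigl((\mathrm{Fl}^t_{A_0})^*R_\varepsilon\bigr)(Y(t))=:\varepsilon\,\widetilde R_\varepsilon(t,Y(t)),
\]
so that in the rescaled time $\tau=\varepsilon t$ the curve $\tau\mapsto Y(\tau/\varepsilon)$ moves with velocity $\widetilde R_\varepsilon(\tau/\varepsilon,\cdot)$, i.e. at $O(1)$ speed over a slow-time interval of fixed length. The whole problem is therefore reduced to showing that this curve does not escape to infinity before slow time $L$, for a suitable $L>0$ independent of $x$ and $\varepsilon$.

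To control the escape I would fix a complete Riemannian metric $g$ on $M$ and argue by the standard non-escape (maximality) criterion: if $\beta\le L/\varepsilon$, then $Y$ must leave every compact set as $t\uparrow\beta$. Choosing a relatively compact open neighbourhood $V$ of $\overline N$, I would estimate, as long as $Y$ stays in a fixed closed ball $\overline B$ around $\overline N$,
\[
\mathrm{dist}_g\bigl(Y(t),x\bigr)\le\varepsilon\int_0^{t}\bigl|\widetilde R_\varepsilon(s,Y(s))\bigr|_g\,\mathrm ds\le L\,\sup\bigl|\widetilde R_\varepsilon\bigr|_g ,
\]
the supremum being taken over $s\in[0,L/\varepsilon]$ and over points of $\overline B$. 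If this supremum is finite, then for $L$ small enough the right-hand side is smaller than the radius of $\overline B$, so $Y$ never reaches $\partial\overline B$; since $\overline B$ is compact (completeness of $g$), this contradicts escape and forces $\beta>L/\varepsilon$. Uniformity of $L$ over $x\in\overline N$ and $\varepsilon\in(0,\delta]$ then follows from the compactness of $\overline N\times[0,\delta]$ together with the joint smoothness of $(\varepsilon,x)\mapsto R_\varepsilon(x)$, which bounds $R_\varepsilon$ uniformly on $\overline V$.

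The main obstacle is precisely the finiteness of that supremum: although $R_\varepsilon$ is uniformly bounded on the compact set $\overline V$ and $\varepsilon\le\delta$, the conjugated field $\widetilde R_\varepsilon(s,\cdot)=(\mathrm{Fl}^s_{A_0})^*R_\varepsilon$ is evaluated along the $A_0$-orbit $\mathrm{Fl}^s_{A_0}(Y)$, which for $s\in[0,L/\varepsilon]$ ranges over a non-compact region, and it is distorted by $(\mathrm D\mathrm{Fl}^s_{A_0})^{-1}$; both effects must be shown to remain bounded as $s\to\infty$. This is where the completeness of $A_0$ is used in an essential way, and where the tameness of its flow enters: in the situation of interest the flow of $A_0$ is periodic, so after averaging one obtains an $A_0$-invariant metric for which $\mathrm{Fl}^s_{A_0}$ acts by isometries and $s\mapsto|\widetilde R_\varepsilon(s,\cdot)|_g$ is periodic, hence bounded, so that the supremum above is finite and the argument closes. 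I expect verifying this uniform-in-$s$ bound, rather than the escape estimate itself, to be the delicate point.
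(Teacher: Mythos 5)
Your reduction is exactly the paper's: both pass to the interaction picture $Y(t)=\mathrm{Fl}^{-t}_{A_0}(x(t))$, observe that $Y$ solves the slow equation $\dot Y=\varepsilon\,(\mathrm{Fl}^t_{A_0})^*R_\varepsilon(Y)$, and try to conclude by a uniform-time existence/non-escape estimate for this conjugated field (the paper phrases this as $\mathrm{Fl}^t_{A_\varepsilon}=\mathrm{Fl}^t_{A_0}\circ\mathrm{Fl}^{t\varepsilon}_{R_t(\varepsilon)}$ and invokes ``the Flow-Box Theorem and compactness of $\overline N$''). The point you single out as delicate --- the finiteness of $\sup_{s\in[0,L/\varepsilon]}\,\sup_{\overline B}\,|(\mathrm{Fl}^s_{A_0})^*R_\varepsilon|$ --- is indeed the crux, and you are right that it does not follow from completeness of $A_0$. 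In fact the proposition is \emph{false} as stated: take $M=\mathbb{R}$, $A_0=x\,\partial_x$ (complete), $R_\varepsilon=x^2\partial_x$, $N=(1,2)$. Then $(\mathrm{Fl}^s_{A_0})^*R_\varepsilon=e^{s}x^2\partial_x$ is exponentially unbounded in $s$, and the solution of $\dot x=x+\varepsilon x^2$ with $x(0)=1$ blows up at $t^*=\ln(1+1/\varepsilon)$, which is $o(L/\varepsilon)$ for every fixed $L>0$. So no choice of $L$ works, and the paper's own appeal to the Flow-Box Theorem at this step is a genuine gap; your diagnosis is correct and your counterexample-shaped worry is realized.

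Where your proposal still falls short is the repair. Averaging a metric makes $\mathrm{Fl}^s_{A_0}$ act by isometries only when the flow of $A_0$ is itself a genuine $\mathbb{S}^1$-action, i.e.\ when the period is constant (as in the paper's examples, where $\omega\equiv1$). For a non-constant period function one can only make the reparametrized generator $\Upsilon=A_0/\omega$ act by isometries; $\mathrm{Fl}^s_{A_0}=\mathrm{Fl}^{s\omega}_{\Upsilon}$ then acquires a shear proportional to $s\,\mathrm{d}\omega$, so $|(\mathrm{Fl}^s_{A_0})^*R_\varepsilon|$ can grow linearly in $s$ and your estimate $\varepsilon\int_0^{L/\varepsilon}|\widetilde R_\varepsilon|\,\mathrm{d}s$ diverges. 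The statement is nevertheless true under the periodicity hypothesis, but the argument must exploit the \emph{return} of the unperturbed orbits: compare $\mathrm{Fl}^t_{A_\varepsilon}$ with $\mathrm{Fl}^t_{A_0}$ by Gronwall over a single period $T$ (bounded on the compact saturation of $\overline N$), obtaining a displacement $O(\varepsilon\,T e^{\Lambda T})$ per period, and then sum over the $\sim L/(\varepsilon T)$ periods to get a total drift $O(L e^{\Lambda T})$, which is controlled by shrinking $L$. I would recommend (i) adding periodicity of the flow of $A_0$ (or invariance of a compact set together with a uniform bound on $(\mathrm{Fl}^s_{A_0})^*$) to the hypotheses, and (ii) replacing the single sup bound by this period-by-period estimate.
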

\begin{proof}
If $X,Y$ are vector firlds on the manifold $M$, their flows are related by
\begin{equation}\label{auxiliar}
\mathrm{Fl}^t_X \circ \mathrm{Fl}^t_{P_t}=\mathrm{Fl}^t_Y,
\end{equation}
where $P_t$ is the time-dependent vector field given by $P_t =-X+(\mathrm{Fl}^t_X)^*Y$. Now, let
$$
(\mathrm{Fl}^t_{A_0})^*A_\varepsilon -A_0 =\varepsilon R_t (\varepsilon),
$$
where $R_t (\varepsilon )=(\mathrm{Fl}^t_{A_0})^*R_\varepsilon$ depends smoothly on $t$ and $\varepsilon$, and
fix a $\delta >0$. By the Flow-Box Theorem and the compactness of the closure $\overline{N}$, there exists an
$L>0$ such that the flow of $R_t (\varepsilon)$ is well-defined on $N$ for any $t\in [0,L]$. Applying \eqref{auxiliar}
to $X=A_0$, $Y=A_\varepsilon$, and $P_t =R_t (\varepsilon)$, we get,
$$
\mathrm{Fl}^t_{A_\varepsilon}=\mathrm{Fl}^t_{A_0}\circ \mathrm{Fl}^{t\varepsilon}_{R_t(\varepsilon)},
$$
and, since $\mathrm{Fl}^t_{A_0}$ is well-defined for all $t\in\mathbb{R}$, the statement follows.
\end{proof}
\begin{defn}
We say that the system described by a vector field of the form $A_\varepsilon =A_0 +\varepsilon R_\varepsilon$,
where $A_0$ has complete flow,
admits a global normalization of order $k$ if, for each open domain $N\subset M$ with compact closure, there
exist a $\delta >0$ and a near-to-identity transformation $F:(-\delta ,\delta )\times N\to M$, which brings $A_\varepsilon$
to a normal form of order $k$.
\end{defn}
\begin{thm}\label{mainthm}
Suppose that the flow of $X_{H_0}$ is periodic with frequency function $\omega\in\mathcal{C}^{\infty}(M)$,
$\omega>0$.
Then, the perturbed Hamiltonian system admits a global normalization of arbitrary order $k$.
In particular, the second order normal form can be expressed as:
\begin{equation}\label{heps}
H_\varepsilon \circ \Phi_\varepsilon = H_0+\varepsilon \langle H_1 \rangle + \frac{\varepsilon^2}{2}\left(\langle H_2 \rangle +\langle\{S\left( \frac{H_1}{\omega}\right),H_1 \}  \rangle \right)+O(\varepsilon^3).
\end{equation}
\end{thm}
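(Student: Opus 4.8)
The plan is to realize $\Phi_\varepsilon$ by the Lie transform method, taking it to be the flow of an $\varepsilon$-dependent Hamiltonian vector field with generator $W_\varepsilon = W_1 + \varepsilon W_2 + O(\varepsilon^2)$, and solving the resulting homological equations globally by means of the operator $\s$. First I would record the Deprit recursion for the pulled-back Hamiltonian $K_\varepsilon = H_\varepsilon\circ\Phi_\varepsilon$: differentiating in $\varepsilon$ and using $\frac{\mathrm{d}}{\mathrm{d}\varepsilon}(f\circ\Phi_\varepsilon)=\{W_\varepsilon,f\}\circ\Phi_\varepsilon$ once and twice gives
$$K_1=H_1+\{W_1,H_0\},\qquad K_2=H_2+2\{W_1,H_1\}+\{W_1,\{W_1,H_0\}\}+\{W_2,H_0\}.$$

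The structural point that makes everything global is that, since $\Upsilon=\frac{1}{\omega}X_{H_0}$, one has $\{H_0,f\}=\omega\,\Liu f$ for all $f$, so each homological operator is $\omega\,\Liu$. Because $\omega>0$ and $1/\omega$ is $\si$-invariant, the first-order requirement $K_1=\langle H_1\rangle$ reads $\Liu W_1=\frac{H_1}{\omega}-\langle\frac{H_1}{\omega}\rangle$, and Proposition~\ref{pro2}(b) furnishes the explicit global solution $W_1=\s(H_1/\omega)$; this also yields the identity $\{W_1,H_0\}=\langle H_1\rangle-H_1$. The same device solves every higher equation $\omega\,\Liu W_n=G_n-\langle G_n\rangle$ (with $G_n$ assembled from the lower orders) by $W_n=\s(G_n/\omega)$, a globally defined smooth function, giving $K_n=\langle G_n\rangle$; since $\Liu\langle G_n\rangle=0$ by Proposition~\ref{pro1}(b), we have $\{H_0,K_n\}=0$, hence $[X_{H_0},X_{K_n}]=0$, and global normalization to arbitrary order $k$ follows by induction.

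To obtain the explicit second-order term I would substitute $\{W_1,H_0\}=\langle H_1\rangle-H_1$ into $K_2$, collapsing it to $K_2=H_2+\{W_1,H_1\}+\{W_1,\langle H_1\rangle\}+\{W_2,H_0\}$, and then take $W_2=\s(G_2/\omega)$ with $G_2=H_2+\{W_1,H_1\}+\{W_1,\langle H_1\rangle\}$, so that $K_2=\langle G_2\rangle$. On averaging, the middle term drops out: $\langle H_1\rangle$ is $\si$-invariant, so the bracket--averaging identity $\langle\{f,g\}\rangle=\{\langle f\rangle,g\}$ (item~\ref{itemb}) gives $\langle\{W_1,\langle H_1\rangle\}\rangle=\{\langle W_1\rangle,\langle H_1\rangle\}$, and $\langle W_1\rangle=\langle\s(H_1/\omega)\rangle=0$ by Proposition~\ref{pro3}(b). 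What remains is exactly $K_2=\langle H_2\rangle+\langle\{\s(H_1/\omega),H_1\}\rangle$, which is \eqref{heps}.

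The algebra above is routine; the real obstacle is to make $\Phi_\varepsilon$ a genuine transformation rather than a formal series. I would dispatch this with the two preceding propositions on flows over domains with compact closure: the generator $X_{W_\varepsilon}$ has exactly the form treated there, so on each such domain $N$ its flow defines, for $\varepsilon$ in a small interval $(-\delta,\delta)$, a near-to-identity transformation with $\Phi_0=\mathrm{id}$. The place where $\omega>0$ and the global validity of Proposition~\ref{pro2}(b) are indispensable is in guaranteeing that each $\s(G_n/\omega)$ is a globally smooth function; this is what upgrades the formal Deprit scheme to a bona fide global normalization, while the compact-closure restriction is precisely what lets the flow of the perturbed generator persist long enough to define $\Phi_\varepsilon$.
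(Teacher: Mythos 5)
Your proposal is correct and follows essentially the same route as the paper: a Lie transform with generators $W_1=\s(H_1/\omega)$ and $W_2=\s\bigl(\frac{1}{\omega}(H_2+\{\s(H_1/\omega),H_1+\langle H_1\rangle\})\bigr)$, which are exactly the paper's $G_0$ and $G_1$, the homological equations being solved globally via Proposition~\ref{pro2}(b) and the cross term killed by $\langle\s(\cdot)\rangle=0$ together with item~\ref{itemb}. The only cosmetic difference is that you derive the generators from the homological equations step by step, whereas the paper announces them up front and verifies the expansion term by term.
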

\begin{proof}
If the Hamiltonian vector field $X_{H_0}$ has periodic flow, the existence of the near-to-identity
canonical transformation $\Phi_\varepsilon$ follows from the above Propositions (see also \cite{MVor-11,Cus84,Mey1,Mey2}). Here we give a
explicit formula for it.\\
Let $\Phi_\varepsilon$ be the flow of the perturbed vector field $Z_\varepsilon=Z_0+\varepsilon Z_1$ where $Z_0$ and
$Z_1$ are the Hamiltonian vector field of the functions $G_0=\frac{1}{\omega}\mathcal{S}(H_1)$ and $G_1=\frac{1}
{\omega}\mathcal{S}(H_{2}+\{\mathcal{S}(\frac{1}{\omega}H_1),H_1+\langle H_1\rangle\})$, respectively.
Using the Lie transform method \cite{Cus84,Dep69,Hor66,Kam70}, the second order development of
$H_\varepsilon \circ \Phi_\varepsilon$ is given by:
\begin{eqnarray}
H_\varepsilon \circ \Phi_\varepsilon &=& H_{0}+\varepsilon\left(
\mathcal{L}_{Z_0}H_{0}+H_{1}\right)\nonumber \\
&&+\frac{\varepsilon^{2}}{2}\left(  \mathcal{L}_{Z_0}^{2}H_{0}%
+2\mathcal{L}_{Z_0}H_{1}+\mathcal{L}_{Z_1}H_{0}+H_{2}\right)
+O(\varepsilon^{3})\label{Hnorm}
\end{eqnarray}
Now, we apply the results of the preceding sections to put this Hamiltonian in the form \eqref{heps}.
To this end, we compute:
\begin{align*}
\mathcal{L}_{Z_0}H_{0} &= -\mathcal{L}_{X_{H_0}}\mathcal{S}(\frac{1}{\omega}H_1)= \langle H_1 \rangle - H_1,\\
\mathcal{L}^2_{Z_0}H_{0} &= \mathcal{L}_{X_{G_0}}(\langle H_1 \rangle - H_1)=\{\frac{1}{\omega}\mathcal{S}(H_1), \langle H_1 \rangle - H_1\},\\
\mathcal{L}_{Z_0}H_{1} &= \mathcal{L}_{X_{G_0}}H_1=\{\frac{1}{\omega}\mathcal{S}(H_1), H_1\},
\end{align*}
and, finally
\begin{align*}
\mathcal{L}_{Z_1}H_{0} &= -\mathcal{L}_{X_{H_0}}\mathcal{S}(H_{2}+\{\mathcal{S}(\frac{1}{\omega}H_1),H_1+\langle H_1\rangle\})\\
&= \langle H_{2}\rangle+\langle\{\mathcal{S}(\frac{1}{\omega}H_1),H_1\}\rangle-(H_{2}+\{\mathcal{S}(\frac{1}{\omega}H_1),H_1+\langle H_1\rangle\}).
\end{align*}
Substituting these identities into \eqref{Hnorm}, we obtain the normal form \eqref{heps}.
\end{proof}

\section{Examples}\label{part2}
In this section we illustrate the computation of the normal form of two particular Hamiltonians on $\mathbb{R}^2$
endowed with the canonical symplectic form,
$\Omega = \mathrm{d}p_1\wedge \mathrm{d}q_1+\mathrm{d}p_2\wedge \mathrm{d}q_1$
(and the corresponding canonical Poisson bracket).
If we have a system admitting an $\mathbb{S}^1-$action, described by a perturbed
Hamiltonian $H=H_0 + \epsilon H_1$, and such that the Hamiltonian vector field of
$H_0$, $X_{H_0}$, has periodic flow with frequency $\omega$ then, as shown in
Theorem \ref{mainthm}, its second-order normal form is given by:
$$
    H_0 +\epsilon\langle H_1\rangle
    +\frac{\epsilon^2}{2}\left( \langle\lbrace \mathcal{S}(\frac{H_1}{\omega}),H_1 \rbrace \rangle \right).
$$
%
\begin{ex}[H\'enon-Heiles Hamiltonian]
This example is taken from \cite{Cus93}. The Hamiltonian is
$$
K=K_0 + \epsilon K_1 =\frac{1}{2}(p^2_1 +p^2_2)+\frac{1}{2}(q^2_1 +q^2_2)+\epsilon \left( \frac{q^3_1}{3}-q_1q^2_2 \right)
$$
(note that the perturbation term is an homogeneous polynomial of degree $3$).
The frequency function for the flow of $X_{K_0}$ is readily found to be constant, $\omega =1$,
and, after some computations, the second-order normal form is found to be:
\begin{align*}
&\frac{{p_2}^{2}+{p_1}^{2}}{2}+\frac{{q_2}^{2}+{q_1}^{2}}{2}-\frac{\epsilon^2}{48}\left( 5\,{q_2}^{4}+\left( 10\,{q_1}^{2}+10\,{p_2}^{2}-18\,{p_1}^{2}\right) \,{q_2}^{2}\right. \\
&\left. +56\,p_1\,p_2\,q_1\,q_2+5\,{q_1}^{4}+\left(10\,{p_1}^{2}-18\,{p_2}^{2}\right)\,{q_1}^{2}+5\,{p_2}^{4}+10\,{p_1}^{2}\,{p_2}^{2}+5\,{p_1}^{4} \right)
\end{align*}
It is usual to express the normal form in terms of the Hopf variables
$w_1,w_2,w_3,w_4$, as a previous step to carry on the reduction of symmetry
process (see \cite{Cus84},\cite{Cus93}).
For the case in which $H_0$ is the Hamiltonian of the $2D-$harmonic oscillator,
these variables form a system of functionally independent generators of the algebra
of first integrals of $H_0$, and are defined as $w_1=2(q_1q_2+p_1p_2)$, $w_2=2(q_1p_2-q_2p_1)$,
$w_3=q_1^2+p_1^2-q_2^2-p_2^2$, $w_4=q_1^2+q_2^2+p_1^2+p_2^2$. Working separately with
the independent term and the coefficient of $\epsilon^2$ in the expression above, we get:
$$
\frac{{w}_{4}}{2},
$$
and
$$
\frac{{w}_{2}^{2}\,\left( 48\,\lambda +7\right) }{48}-\frac{{w}_{4}^{2}\,\left( 48\,\lambda+5\right) }{48}+{w}_{3}^{2}\,\lambda+{w}_{1}^{2}\,\lambda .
$$
In the process of expressing the $q_i,p_i$ variables in terms of the $w_j$, a parameter $\lambda$ appears as a consequence of the fact that the corresponding system of equations is indeterminate.
The formulas appearing in \cite{Cus93} are recovered by choosing the value
$0$ of the parameter:
$$
\frac{7\,{w}_{2}^{2}}{48}-\frac{5\,{w}_{4}^{2}}{48}.
$$
Thus, the second-order normal form of the H\'enon-Heiles system is
$$
H_\epsilon \circ \Phi_\epsilon =\frac{w_4}{2}+\frac{\epsilon^2}{48}\left( 7w^2_2 -5w^2_4 \right) +O(\epsilon^3).
$$
\end{ex}

\begin{ex}[The elastic pendulum]
Consider the case of the Hamiltonian of a elastic pendulum (see \cite{BB73},\cite{BM81},\cite{Geo99}):
$$
     H(q_1,p_1,q_2,p_2)=\frac{p_1^2+p_2^2}{2}+\frac{q_1^2+q_2^2}{2}-\frac{\epsilon}{2} q_1^2(1+q_2),
$$
which is that of a perturbed system $H_0+\epsilon H_1$, where
$$
               H_0(q_1,p_1,q_2,p_2)=\frac{p_1^2+p_2^2}{2}+\frac{q_1^2+q_2^2}{2},
$$
and
$$
                    H_1(q_1,p_1,q_2,p_2)=-\frac{q_1^2(1+q_2)}{2}.
$$
Note that the perturbation term now is \emph{not} homogeneous. The computation of the normal form in the original variables gives the result:
\begin{align*}
&\frac{{p_2}^{2}+{p_1}^{2}}{2}+\frac{{q_2}^{2}+{q_1}^{2}}{2}-\frac{\epsilon}{4}\left( {q_1}^{2}+{p_1}^{2}\right) \\
&-\frac{{\epsilon}^{2}}{192}\,\left( \left( 20\,{q_1}^{2}-4\,{p_1}^{2}\right) \,{q_2}^{2}+48\,p_1\,p_2\,q_1\,q_2+5\,
{q_1}^{4}+ \right. \\
& \left. \left( -4\,{p_2}^{2}+10\,{p_1}^{2}+12\right) \,{q_1}^{2}+20\,{p_1}^{2}\,{p_2}^{2}+5\,{p_1}^{4}+12\,
{p_1}^{2}\right)
\end{align*}
As before, we can express in terms of the Hopf variables the independent terms and the coefficient of $\epsilon$,
getting:
$$
\frac{{w}_{4}}{2},
$$
and
$$
-\frac{{w}_{4}}{8}-\frac{{w}_{3}}{8}.
$$
Note, however, that the coefficient of $\epsilon^2$ is \emph{not} a homogeneous
polynomial (of degree $4$): there are two $2-$degree terms:
$(q_1^2+p_1^2)/16$. Luckily, these terms
can be easily expressed in terms of the variables $w_1,w_2,w_3,w_4$
(as $(q_1^2+p_1^2)/16=(w_4+w_3)/32$) and then we can analyse the remainder, which \emph{is}
a polynomial of degree $4$. Again, a parameter $\mu$ appears in the process:
$$
-\frac{{w}_{4}^{2}\,\left( 768\,\mu +25\right) }{768}+\frac{{w}_{3}^{2}\,\left( 256\,\mu +5\right) }{256}+\frac{{w}_{2}^{2}\,\left( 32\,\mu +1\right) }{32}+{w}_{1}^{2}\,\mu -\frac{5\,{w}_{3}\,{w}_{4}}{384}.
$$
Let us take the simplest solution $\mu =0$:
$$
-\frac{25\,{w}_{4}^{2}}{768}-\frac{5\,{w}_{3}\,{w}_{4}}{384}+\frac{5\,{w}_{3}^{2}}{256}+\frac{{w}_{2}^{2}}{32}.
$$
The remainder in the coefficient of $\epsilon^2$ is:
$$
\frac{{w}_{4}}{32}+\frac{{w}_{3}}{32}.
$$
Thus, we get the second-order normal form of the elastic pendulum
in the Hopf variables:
\[
H_\epsilon \circ \Phi_\epsilon =\frac{w_4}{2} -\frac{\epsilon}{8}\left( w_4+w_3\right)
+\frac{{\epsilon}^{2}}{32}\left( w_4+w_3+w^2_2-\frac{25{w_4}^{2}}{24}-\frac{5w_3w_4}{12}+\frac{5{w_3}^{2}}{8}\right)
+O(\epsilon^3).
\]
\end{ex}
\begin{rem}
One of the advantages of the representation \eqref{heps} for the second-order normal form,
is that it allows an easy implementation in any Computer Algebra System (CAS), as it does not involve
the resolution of the homological equations. Indeed, the computations above were
carried out with a package written in Maxima \cite{Max12}, available
at the URL \url{http://galia.fc.uaslp.mx/~jvallejo/pdynamics.zip}. It contains a detailed documentation
illustrating its use with the preceding examples.
\end{rem}

\section*{References}

\end{document}